\newtheorem{theorem}{Theorem}
\newtheorem{lemma}{Lemma}
\newcommand{\RomanNumeralCaps}[1]{\MakeUppercase{\romannumeral #1}}
\begin{document}
%
% paper title
% Titles are generally capitalized except for words such as a, an, and, as,
% at, but, by, for, in, nor, of, on, or, the, to and up, which are usually
% not capitalized unless they are the first or last word of the title.
% Linebreaks \\ can be used within to get better formatting as desired.
% Do not put math or special symbols in the title.
\title{Analytical Voltage Sensitivity Analysis for Unbalanced Power Distribution System}

\author{\IEEEauthorblockN{Sai Munikoti,~\textit{Student Member, IEEE}, Kumarsinh Jhala,~\textit{Member, IEEE}, Kexing Lai,~\textit{Member, IEEE}, \\ Balasubramaniam Natarajan,~\textit{Senior Member, IEEE}
\thanks{K. Jhala is with the Center for Energy, Environmental, and Economic Systems Analysis in the Energy Systems Division at Argonne National Laboratory (e-mail: kjhala@anl.gov).}
\thanks{S. Munikoti, K. Lai and B. Natarajan are with Electrical and Computer Engineering, Kansas State University, Manhattan, KS-66506, USA, (e-mail: saimunikoti@ksu.edu, klai@ksu.edu, bala@ksu.edu)}}}

\maketitle

% As a general rule, do not put math, special symbols or citations
% in the abstract
\begin{abstract}
Large scale integration of distributed energy resources and electric vehicles in a transactive energy environment present new challenges in terms of voltage stability and fluctuations in a power distribution system. The impact of different level of DER/EV penetration on the voltages across the network is typically quantified through voltage sensitivity analyses. Existing methods of voltage sensitivity analysis are computationally expensive and prior efforts to develop analytical approximation lacks generality and have not been effectively validated. The objective of this work is to provide a new analytical method of voltage sensitivity analysis that has low computational cost and also allows for stochastic analysis of voltage change. This paper first derives an analytical approximation of change in voltage at a particular bus due to change in power consumption at other bus in a radial three phase unbalanced power distribution system. Then, the proposed method is shown to be valid for different load configurations, which demonstrates its generality. The results from our analytical approach is validated via classical load flow simulation of the test system based on IEEE 37 bus network. The proposed method is shown to have good accuracy, and computation complexity is of order $O(1)$, compared to $O(n^{3})$ in classical sensitivity analysis approaches.   
\end{abstract}
\begin{IEEEkeywords}
	Power Distribution, Unbalanced, Voltage, Sensitivity, Analytical, DER 
\end{IEEEkeywords}

% For peer review papers, you can put extra information on the cover
% page as needed:
% \ifCLASSOPTIONpeerreview
% \begin{center} \bfseries EDICS Category: 3-BBND \end{center}
% \fi
%
% For peerreview papers, this IEEEtran command inserts a page break and
% creates the second title. It will be ignored for other modes.
\IEEEpeerreviewmaketitle

\vspace{-2mm}
\section{Introduction}
\IEEEPARstart{T}{he} integration of distributed energy resources (DERs), electric vehicles (EVs) along with active consumers are some of the key ingredients of future smart grid. By 2050, USA and China plan to meet $80\%$ and $60\%$ of their respective demand through renewables \cite{yang2016china}. However, this integration brings new challenges in terms of voltage stability and fluctuations due to increase in the underlying uncertainty and complexity of the system \cite{sun2019review}. To quantify the impact of different level of DER/EV penetration on voltages across the network, sensitivity analysis is a key enabling tool. Voltage sensitivity analysis (VSA) studies change in voltage at a certain bus as a function of change in complex power at some arbitrary bus in the distribution network. Conventional methods of VSA include the  Newton-Raphson load flow and the perturb-and-observe method. These methods suffer from high computational cost and lack of insights about system states. Within the numerical paradigm, Monte-Carlo simulation based scenario analysis using load flow solutions is used to study the impact of large scale implementation of DERs on voltages. As the size of the distribution system and the DER penetration level increases, number of scenarios as well as the computational complexity of the load flow solution increases. On the other hand, existing analytical approaches to sensitivity analysis are incomplete in terms of validation and scalability. The challenges associated with the distribution system like unbalanced nature of loads, different load configuration (e.g., star and delta) as well as different load types, (e.g., constant power, constant current, constant impedance) have limited the prior analytical efforts in VSA. 
%\textcolor{orange}{Also, Monte-Carlo simulation based scenario analysis using power flow solutions is used as the primary tool to understand the impact of large scale implementation of DERs. As the size of the distribution system and the DER penetration level increases, number of scenarios as well as computational complexity of the power flow solution increases. Hence, computational complexity of such scenario analysis increases exponentially with the size of the network.
Therefore, this work proposes an analytical and computationally efficient method to analyze large scale impact of DERs in the power distribution system.
%Therefore, an computationally efficient analytical approach for voltage sensitivity analysis needs to be developed which is computationally efficient.
This would pave the way for a more general stochastic framework of VSA that can systematically account for the spatio-temporal uncertainties associated with DERs. \cite{jhala2017probabilistic}. 

%\textcolor{orange}{Due to DERs, the load change at actor nodes is random, which introduces uncertainty in voltage change. The proposed framework could be used to estimate the probability of voltage violations and find dominant nodes, which have maximum influence on the voltage sensitivity of critical nodes such as hospitals, schools, etc,. Moreover, the power at the dominant nodes could be controlled to mitigate the voltage violation at the critical nodes}.\\
%In this work, we propose an analytical approximation of voltage sensitivity for an unbalanced distribution system which can be applied to different load configuration and load types.      
\textbf{Related work:}
Voltage sensitivity analysis has been used for voltage control in systems with distributed generation (DG). In these works \cite{aghatehrani2012sensitivity,aghatehrani2012reactive,
weckx2014voltage, valverde2013model}, sensitivity analysis is mainly executed via traditional approaches, which are computationally expensive. For instance in \cite{aghatehrani2012reactive}, Newton-Raphson method is used to control the voltage variations in a PV system. There are very few works which attempt to pursue an analytical approach to sensitivity analysis but most of these approaches are not validated through simulation \cite{brenna2010voltage,zad2016centralized,zad2015optimal}. In \cite{brenna2010voltage}, a new sensitivity method is proposed for reactive power change, and is used to select the most effective generator for controlling the voltage in a system with DGs. However, it is assumed that the power losses are negligible which is not realistic. In \cite{zad2016centralized,zad2015optimal}, active and reactive power of DGs are controlled to keep voltage within safe limits. Sensitivity analysis proposed in these papers characterizes the voltage variation at any bus with respect to voltage at a reference bus, but the theoretical results are not validated with simulations. In our prior work \cite{jhala2017probabilistic}, an analytical bound has been derived for voltage change and the results are validated with a standard test system. However, this is done for balanced single phase system. Building on \cite{jhala2017probabilistic}, \cite{jhala2019dominant} 
identifies the most influential nodes which affect the voltage of critical nodes and this can help in quick restoration of voltage services in case of natural disaster or cyber attacks. Authors in \cite{klonari2016application}, have taken a probabilistic  approach where smart meter measurements are used along with sensitivity analysis to define boundary values of various operation indices. Here, real and reactive power consumption of homes are assumed to be independent which is not the case in reality. Similarly, authors in \cite{valverde2018estimation} have used smart meter data, where 
%hierarchical clustering is done before training a regression model. 
the regression model is used to predict the voltage change. Both \cite{klonari2016application },\cite{valverde2018estimation} are not scalable to large distribution systems and are dependent on data availability. To summarize, existing analytical approaches lack generality and computational efficiency, limiting their application to large scale unbalanced distribution system.\\
\textbf{Contributions:}     
This work proposes an analytical approximation of voltage sensitivity in a general three phase distribution system. Major contributions of this paper includes :
\begin{itemize}
\item An analytical approximation of voltage change in any bus due to change in power at any arbitrary bus in a unbalanced radial distribution network is derived for the first time in Section \RomanNumeralCaps{2}. This approximation is validated by simulation of IEEE 37 bus system in Section \RomanNumeralCaps{3}.
\item The approximation holds true for star and delta configuration with different impedance matrices for each case. 
\item The computational complexity of proposed method is  $O(1)$, compared to $O(n^{3})$ in classical NR method.
%\item The approximation also works for constant power, constant current and constant impedance loads.
%\item The computational complexity of the proposed algorithm is significantly lower, compared to the traditional sensitivity analysis methods that employ load flow calculations.
\end{itemize} 
%In the next section, a brief review of traditional VSA methods has been done followed by the derivation of the proposed analytical approximation.  
%The rest of the paper is organized as follows: Section \RomanNumeralCaps{2} discusses the traditional VSA methods followed by the derivation of the proposed analytical approximation. Section \RomanNumeralCaps{3} presents the simulation results. The conclusion and future works are presented in Section \RomanNumeralCaps{4}. 
\section{Voltage Sensitivity Analysis}
\subsection{Traditional voltage sensitivity analysis}
VSA implies the sensitivity of voltage (voltage change) with respect to change in complex power. A sensitivity matrix is normally developed to visualize the coupling between change of voltage magnitude and angle, and change of power consumption/injection.
%consider a distribution network with $N$ nodes with $P_{i}, Q_{i}$ representing the real and reactive power at node $i$respectively. If there is a change in power at any node, there is a change in both the magnitude $|\Delta V_{j}|$ and angle of voltage $\delta_{j}$ at any arbitrary node j. Sensitive matrix relates this change in voltage magnitude and angle with the change in power consumption or injection.
Two Traditional approaches of sensitivity analysis are Newton-Raphson (NR) and Perturb-and-Observe method, but NR method is popular. To understand NR method, let us consider a distribution network with $N$ buses where  $P_{i}, Q_{i}$ represent the real and reactive power at some node $i$ respectively. Let $|\Delta V_{j}|$, $\delta_{j}$ represent the changes in voltage magnitude and angle triggered by the power change at a certain node. In the iterative NR method, a set of non-linear equations is used to  relate voltage changes($|\Delta V_{j}|$, $\delta_{j}$) with power change($P_{i}, Q_{i}$), which is approximated by linear operator, popularly known as Jacobian/Sensitivity matrix. 
%as shown in: 
%\begin{equation}
%\begin{bmatrix}
%\Delta \delta_{2}(i)\\
%\vdots\\
%\Delta \delta_{N}(i) \\
%\hline
%\Delta V_{2}(i) \\
%\vdots\\
%\Delta V_{N}(i)
%\end{bmatrix} = J^{-1} \begin{bmatrix}
%\Delta P_{2}(i)\\
%\vdots\\
%\Delta P_{N}(i) \\
%\hline
%\Delta Q_{2}(i) \\
%\vdots\\
%\Delta Q_{N}(i)
%\end{bmatrix}
%\end{equation}
%where J = $\begin{bmatrix}
%\frac{\partial P_{2}}{\partial \delta_{2} } \hdots \frac{\partial P_{2}}{\partial \delta_{N} } \vline  \frac{\partial P_{2}}{\partial V_{2} } \hdots \frac{\partial P_{2}}{\partial V_{N} } \\
%\vdots  \\
%\frac{\partial P_{N}}{\partial \delta_{2} } \hdots \frac{\partial P_{N}}{\partial \delta_{N} } \vline \frac{\partial P_{N}}{\partial V_{2} } \hdots \frac{\partial P_{N}}{\partial V_{N} }\\
%\hline \\
%\frac{\partial Q_{2}}{\partial \delta_{2} } \hdots \frac{\partial Q_{2}}{\partial \delta_{N} } \vline \frac{\partial Q_{2}}{\partial V_{2} } \hdots \frac{\partial Q_{2}}{\partial V_{N} } \\
%\vdots \\
%\frac{\partial Q_{N}}{\partial \delta_{2} } \hdots \frac{\partial Q_{N}}{\partial \delta_{N} } \vline \frac{\partial Q_{N}}{\partial V_{2} } \hdots \frac{\partial Q_{N}}{\partial V_{N} } \\
%\end{bmatrix}$ \\ \\
%$J$ is the sensitivity/Jacobian matrix.
%Here bus one is not included in calculation because it is generally considered as slack bus with stable voltage of $1\angle 0$ p.u.
With the initial setting of all the variables, the voltage is updated iteratively until the convergence criterion is met. This method is computationally expensive and the complexity grows with the size of the network as $O(n^3)$. Also, the Jacobian matrix is valid only for a specific state of the system and needs to be recomputed, if there are major changes in the states of the system.
%An alternate sensitivity analysis method is the  perturb-and-observe approach where a small perturbation in complex bus power is done to measure the impact on the bus voltages. This approach is also computationally complex as it requires the entire state to be recomputed for any changes in the network.
%There are some demerits of these approaches. Perturb and observe method is not much efficient because it recomputes the entire state for any changes in the network.
%NR method is computationally expensive when size of the network is very large which is true in case of real scenarios. Also, Jacobian matrix is valid only for specific state of the system and need to recompute for major changes in the states of the system.
Moreover, this method is purely numeric, i.e., the computed sensitivity matrix does not provide any analytical insights. Thus, any stochastic analysis done using this method, is valid for a specific state of the system. Therefore, there is a need for more general and computationally efficient approach which can be used for future distribution grid analysis. 
%For the future prospective of the grid, when states of the system are random, for instance node voltages are random in case of PV injection, we need analytical method of sensitive analysis to model the probability distribution of voltage change. 
%Next section provides an computationally efficient approximation of voltage change in three phase distribution system.    
   
\subsection{Analytical approximation of voltage sensitivity analysis}
In this subsection, we propose an analytical method for VSA in a three phase distribution system. Change in power at any one phase of a bus results in voltage change at all buses of the distribution system. Nodes where power is changing are referred to as actor nodes, whereas the node where voltage change is monitored is termed as an observation node. For now, assume that there is a single actor node, i.e., power is changing at a single node, the load model is a constant power load with star configuration and source bus is a slack bus. In our previous work, we have proposed a 1about VSA for balanced radial distribution network, which can be found in Theorem 1 of \cite{jhala2017probabilistic}. We further extend this for three phase unbalanced distribution system and develop  Theorem 1 as :
%\vspace{-2mm}
%\begin{figure}[h!]
%	\centering
%	\includegraphics[width = 7cm,height=3cm]{Network_eg.png}
%	\caption{Example network with $\theta$ as actor node and $O$ as observation node }
%	\label{fig:1}
%\end{figure}
\vspace{-2.0mm}
\begin{theorem}
For a unbalanced power distribution system, change in complex voltage $\Delta V_{O\theta}$ at an observation node due to change in complex power of an actor node can be approximated by 
\begin{equation}	
\begin{bmatrix}
\Delta V_{O\theta}^{a} \\
\Delta V_{O\theta}^{b} \\
\Delta V_{O\theta}^{c} 
\end{bmatrix} \approx -	
\begin{bmatrix}
\frac{\Delta S_{\theta a}^{\star}Z_{aa}}{V_{\theta a}^{\star }} + \frac{\Delta S_{\theta b}^{\star}Z_{ab}}{V_{\theta b}^{\star }}+ \frac{\Delta S_{\theta c}^{\star}Z_{ac}}{V_{\theta c}^{\star }}  \\
\frac{\Delta S_{\theta a}^{\star}Z_{ba}}{V_{\theta a}^{\star }} + \frac{\Delta S_{\theta b}^{\star}Z_{bb}}{V_{\theta b}^{\star }}+ \frac{\Delta S_{\theta c}^{\star}Z_{bc}}{V_{\theta c}^{\star }}  \\
\frac{\Delta S_{\theta a}^{\star}Z_{ca}}{V_{\theta a}^{\star }} + \frac{\Delta S_{\theta b}^{\star}Z_{cb}}{V_{\theta b}^{\star }}+ \frac{\Delta S_{\theta c}^{\star}Z_{cc}}{V_{\theta c}^{\star }} 
\end{bmatrix}
\label{eq:2} 
\end{equation}
where $a,b$ and $c$ represents the three phases, and this notation is used throughout the paper; $V_{\theta}^{*}$ and $\Delta S_{\theta}$ represent complex conjugate of voltage and complex power change at actor node $\theta$, respectively; $Z$ denotes the self or mutual impedance of the shared path between observation node and actor node from source node.
\end{theorem}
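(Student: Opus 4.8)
The plan is to derive the result from the fundamental circuit laws governing a radial three-phase feeder, and then specialize them to the perturbation pair $(\Delta S_\theta, \Delta V_{O\theta})$. The starting point is that, in a radial network, there is a unique path from the slack (source) bus to every node, so the path to the observation node $O$ and the path to the actor node $\theta$ coincide on an initial \emph{shared segment} and then branch apart. I would build directly on the single-phase balanced result (Theorem 1 of \cite{jhala2017probabilistic}) and lift it to the three-phase coupled setting by replacing scalar line impedances with $3\times 3$ self/mutual impedance matrices and scalar currents with phase-current vectors.

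First I would write the constant-power star-load relation at the actor node per phase, $I_{\theta}^{p} = (S_{\theta}^{p}/V_{\theta}^{p})^{*} = (S_{\theta}^{p})^{*}/(V_{\theta}^{p})^{*}$ for each $p\in\{a,b,c\}$. Perturbing the power by $\Delta S_\theta$ and linearizing about the operating point while treating the actor-node voltage as approximately fixed gives the injected current perturbation $\Delta I_{\theta}^{p}\approx (\Delta S_{\theta}^{p})^{*}/(V_{\theta}^{p})^{*}$. Next I would invoke the three-phase voltage-drop law across a line segment $\ell$, namely $\Delta V_{\ell}=Z_{\ell}\,I_{\ell}$, where $Z_{\ell}$ carries self-impedances on the diagonal and mutual impedances off-diagonal, and $I_\ell$ is the vector of phase currents in that segment.

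The central topological step is the radial-network argument: when the load current changes by $\Delta I_\theta$ at the actor node, Kirchhoff's current law forces this increment to flow only along the unique source-to-$\theta$ path, so a segment of the source-to-$O$ path experiences a current change if and only if it is shared with the source-to-$\theta$ path. Since node voltages equal the source voltage minus the accumulated drops along their path, summing the drop contributions over exactly those shared segments and defining $Z$ as the accumulated self/mutual impedance matrix of the shared path yields $\Delta V_{O\theta}\approx -Z\,\Delta I_\theta$. Writing this matrix--vector product out phase by phase and substituting $\Delta I_\theta^{p}\approx (\Delta S_\theta^{p})^{*}/(V_\theta^{p})^{*}$ reproduces (\ref{eq:2}); the minus sign expresses that an increased load draws more current, raising the drop along the shared path and lowering the observation-node voltage.

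I expect the main obstacle to be justifying the two approximations that convert the exact nodal equations into this closed form. The first is the linearization $\Delta I_{\theta}^{p}\approx (\Delta S_{\theta}^{p})^{*}/(V_{\theta}^{p})^{*}$, which discards the second-order term arising from the simultaneous change $\Delta V_\theta$ of the actor-node voltage; I would argue this term is of relative order $|\Delta V_\theta|/|V_\theta|$ and hence negligible for the small deviations typical in distribution feeders. The second is the implicit assumption that the perturbation leaves the currents on non-shared branches (and the downstream voltages feeding other constant-power loads) essentially unchanged, so that superposition over shared segments is exact to first order. Carefully bounding these neglected terms, and verifying that the $3\times 3$ mutual-coupling structure introduces no first-order cross-terms beyond those already captured in $Z$, is where the real work lies; assembling the three rows of the matrix is then routine.
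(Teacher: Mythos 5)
Your proposal is correct and follows essentially the same route as the paper's proof: KVL along the source-to-observation path, the constant-power relation $I = S^{\star}/V^{\star}$, the radial-topology restriction to the shared path giving $Z_{O\theta}=\sum_{e\in E_{o}\cap E_{\theta}}Z_{e}$, and the neglect of $\Delta V_{\theta}$ relative to $V_{\theta}$. The only cosmetic difference is ordering --- you linearize the actor-node current first and then propagate the increment through the network, whereas the paper propagates the exact perturbed expression and then applies the approximation $\Delta V_{n}^{\star}/(V_{n}^{\star}+\Delta V_{n}^{\star})\to 0$ (once in the numerator and once again after splitting into real and imaginary parts) --- and you correctly identify the same two neglected second-order effects that the paper's approximation absorbs.
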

\begin{proof} Voltage at an observation node can be computed in terms of the difference between voltage at the source node and sum of the voltage drops across all lines/edges between the source node and observation node. Let $E_{o}$ be set of all edges between the source node and the observation node. Using KVL, voltage at observation node $o$ can be written as:
\begin{equation}
\begin{bmatrix}
V_{Oa} \\[3pt]
V_{Ob} \\[3pt]
V_{Oc} 
\end{bmatrix} =\begin{bmatrix}
V_{Sa} \\[3pt]
V_{Sb} \\[3pt]
V_{Sc} 
\end{bmatrix} -\sum_{e\epsilon E_{o}}\begin{bmatrix}
V_{ea}^{d} \\[3pt]
V_{eb}^{d} \\[3pt]
V_{ec}^{d}  
\end{bmatrix}
\label{eq:3} 	
\end{equation}
where $V_{O}$, $V_{S}$, and $V_{e}^{d}$ are voltage at observation node, voltage of source node, and the voltage drop across edge $e$, respectively. Let $I_{e}$ and $Z_{e}$ be the current and impedance for edge $e$. Here, along with self impedance, mutual impedance of the line will also contribute to the voltage drop. In LV distribution network, value of shunt impedance can be ignored. We can represent (\ref{eq:3}) in a form incorporating line current and impedance, denoted by $I_{e}$ and $Z_{e}$ as:
\begin{equation}
\pmb{V_{O}} = \pmb{V_{S}} - \sum_{e\epsilon E_{o}} I_{e}\pmb{Z_{e}}
\label{eq:4}
\end{equation}
where $\pmb{V_{O}}=\begin{bmatrix}
V_{Oa} \\
V_{Ob} \\
V_{Oc} 
\end{bmatrix}$ and  $\pmb{Z_{e}}=\begin{bmatrix}
z_{e,aa} &z_{e,ab} & z_{e,ac} \\
z_{e,ba} &z_{e,bb} & z_{e,bc}  \\
z_{e,ca} &z_{e,cb} & z_{e,cc} 
\end{bmatrix}$ \\ \\ 
Let $S_{n}$ be complex power consumption or injection at node $n$ and $V_{n}^{*}$ be the complex conjugate of voltage at node n. The current flowing through a particular phase of edge $e$ can be written as $\sum_{n\epsilon N_{e}}\frac{S_{n}}{V_{n}^{*}}$ where $N_{e}$ is the set of all nodes $n$ for which edge $e$ is between node $n$ and source node. Power from the source node to all the nodes in the set $N_{e}$ flows through edge $e$. Therefore, current in edge $e$ will be affected by the power change at nodes $ n \epsilon N_{e}$. Therefore, the voltage at the observation node can be written as: 
\begin{equation}
	\pmb{ V_{O}} =\pmb{V_{S}} -\sum_{e\epsilon E_{o}}  \sum_{n\epsilon N_{e}} \pmb{Z_{e}} 
	\begin{bmatrix}
	\frac{S_{na}^{\star}}{V_{na}^{\star}} \hspace{0.1cm} \frac{S_{nb}^{\star}}{V_{nb}^{\star}} \hspace{0.1cm} \frac{S_{nc}^{\star}}{V_{nc}^{\star}}
	\end{bmatrix}^{T}
	\label{eq:5} 
\end{equation}  
When power consumption of node $n$ changes from $S_{n}$ to $S_{n}^{'}$, the voltage will change from $V_{n}$ to $V_{n}^{'}$ and consequently voltage at observation node will change to $V_{o}^{'}$. The new voltage at observation node can be written as:
\begin{equation}
\pmb{V_{O}^{'}} = \pmb{V_{S}} - \sum_{e\epsilon E_{o}}  \sum_{n\epsilon N_{e}} \pmb{Z_{e}} 
\begin{bmatrix}
\frac{S_{na}^{'\star}}{V_{na}^{'\star}} \hspace{0.1cm}
\frac{S_{nb}^{'\star}}{V_{nb}^{'\star}} \hspace{0.1cm}
\frac{S_{nc}^{'\star}}{V_{nc}^{'\star}}
\end{bmatrix}^{T}
\label{eq:6} 
\end{equation} 
where  $S_{n}^{'\star}= S_{n}^{\star}+\Delta S_{n}^{\star}$ and $V_{n}^{'}= V_{n}+\Delta V_{n}$. The effective voltage change at observation node can be written as $\Delta V_{o}=V_{o}-V_{o}^{'}$. Using (\ref{eq:5}) and (\ref{eq:6}), change in voltage at observation node can be expressed as:
\begin{equation}
\begin{split}
\pmb{\Delta V_{O}} 
%& =\sum_{e\epsilon E_{o}}  \sum_{n\epsilon N_{e}} \pmb{Z_{e}} 
%\begin{bmatrix}
%\frac{S_{na}}{V_{na}^{\star}} \hspace{0.1cm}
%\frac{S_{nb}}{V_{nb}^{\star}} \hspace{0.1cm}
%\frac{S_{nc}}{V_{nc}^{\star}}
%\end{bmatrix}^{T} -\\
%& \hspace{2.5cm}\sum_{e\epsilon E_{o}}  \sum_{n\epsilon N_{e}} \pmb{Z_{e}} 
%\begin{bmatrix}
%\frac{S_{na}^{'}}{V_{na}^{'\star}} \hspace{0.1cm}
%\frac{S_{nb}^{'}}{V_{nb}^{'\star}} \hspace{0.1cm}
%\frac{S_{nc}^{'}}{V_{nc}^{'\star}} 
%\end{bmatrix}^{T} \\
& =\sum_{e\epsilon E_{o}} \left(\sum_{n\epsilon N_{e}} \begin{bmatrix}
\frac{S_{na}^{\star}}{V_{na}^{\star}} - \frac{S_{na}^{\star}+ \Delta S_{na}^{\star}}{V_{na}+\Delta V_{na}} \\[5pt]
\frac{S_{nb}^{\star}}{V_{nb}^{\star}} - \frac{S_{nb}^{\star}+ \Delta S_{nb}^{\star}}{V_{nb}+\Delta V_{nb}}\\[5pt]
\frac{S_{nc}^{\star}}{V_{nc}^{\star}} - \frac{S_{nc}^{\star} + \Delta S_{nc}^{\star}}{V_{nc}+\Delta V_{nc}}
\end{bmatrix} \right) \pmb{Z_{e}} \\
& =\sum_{e\epsilon E_{o}} \left( \sum_{n\epsilon N_{e}} \begin{bmatrix}
\frac{S_{na}^{\star}\Delta V_{na}^{\star}-\Delta S_{na}^{\star} V_{na}}{V_{na}^{\star}(V_{na}^{\star}+\Delta V_{na}^{*})}  \\[5pt]
\frac{S_{nb}^{\star}\Delta V_{nb}^{\star}-\Delta S_{nb}^{\star} V_{nb}}{V_{nb}^{\star}(V_{nb}^{\star}+\Delta V_{nb}^{*})}  \\[5pt]
\frac{S_{nc}^{\star}\Delta V_{nc}^{\star}-\Delta S_{nc}^{\star} V_{nc}}{V_{nc}^{\star}(V_{nc}^{\star}+\Delta V_{nc}^{*})} 
\end{bmatrix} \right) \pmb{Z_{e}}
\end{split}
\label{eq:7} 
\end{equation}
In practice, voltage changes are typically small compared to actual node voltage. Hence, it is reasonable to assume that $\Delta V_{n}^{*}/(V_{n}^{*}+\Delta V_{n}^{*}) \to 0 $. Thus, (\ref{eq:7}) can be approximated as:
\begin{equation}
\pmb{\Delta V_{O}}
=\sum_{e\epsilon E_{o}} \left( \sum_{n\epsilon N_{e}} \begin{bmatrix}
\frac{-\Delta S_{na} ^{\star}}{V_{na}^{\star}+\Delta V_{na}^{*}}  \\[5pt]
\frac{-\Delta S_{nb} ^{\star}}{V_{nb}^{\star}+\Delta V_{nb}^{*}}  \\[5pt]
\frac{-\Delta S_{nc} ^{\star}}{V_{nc}^{\star}+\Delta V_{nc}^{*}} 
\end{bmatrix} \right) \pmb{Z_{e}}
\label{eq:8} 
\end{equation}  
Equation (\ref{eq:8}) holds true even for the cases when there are multiple actor nodes. However, here only one actor node is changing its power consumption, therefore, $\Delta S_{n}$ is zero for all nodes except actor node $\theta$. Let $E_{\theta }$ be set of all edges between the actor node and source node. When actor node $\theta$ changes power consumption, current flowing through the edges changes for all edges of set $E_{\theta}$. Voltage drop across the edges between the source node and observation node, changes only for edges that belongs to subset $E_{\theta } \cap E_{o} $. 
\begin{equation} 
\begin{split}
\pmb{\Delta V_{O\theta}} 
%& =\sum_{e\epsilon E_{o}\cap E_{\theta}} \left( \begin{bmatrix}
%\frac{-\Delta S_{\theta a} }{V_{\theta a}^{\star}+\Delta V_{\theta a}^{*}}  \\[5pt]
%\frac{-\Delta S_{\theta b} }{V_{\theta b}^{\star}+\Delta V_{\theta b}^{*}}  \\[5pt]
%\frac{-\Delta S_{\theta c} }{V_{\theta c}^{\star}+\Delta V_{\theta c}^{*}} 
%\end{bmatrix} \right) \pmb{Z_{e }} \\
%& 
=\left( \begin{bmatrix}
\frac{-\Delta S_{\theta a}^{\star} }{V_{\theta a}^{\star}+\Delta V_{\theta a}^{*}}  \\[5pt]
\frac{-\Delta S_{\theta b}^{\star} }{V_{\theta b}^{\star}+\Delta V_{\theta b}^{*}}  \\[5pt]
\frac{-\Delta S_{\theta c}^{\star} }{V_{\theta c}^{\star}+\Delta V_{\theta c}^{*}} 
\end{bmatrix} \right) \pmb{Z_{ O \theta }}
\end{split}
\label{eq:9} 
\end{equation} 
where $\pmb{Z_{ O \theta }} = \sum_{e\epsilon E_{o}\cap E_{\theta}} \pmb{Z_{e}}$ is the impedance matrix. Here, each component in the summation is the impedance of the shared path between the actor node and observation node from source node. We decompose (\ref{eq:9}) into real and imaginary components as follows:
\begin{equation}
\begin{split}
\pmb{\Delta V_{O\theta}^{r}} = -
\begin{bmatrix}
\frac{(\Delta P_{\theta a}R_{o\theta,aa}+\Delta Q_{\theta a}X_{o\theta,aa})(V_{\theta a}^{\star r}+\Delta V_{\theta a}^{\star r})}{(V_{\theta a}^{\star r}+ \Delta V_{\theta a}^{\star r})^{2}+(V_{\theta a}^{\star i}+ \Delta V_{\theta a}^{\star i})^{2}} - \\ 
\frac{(\Delta P_{\theta a}X_{o\theta,aa}-\Delta Q_{\theta a}R_{o\theta,aa})(V_{\theta a}^{\star i}+\Delta V_{\theta a}^{\star i})}{(V_{\theta a}^{\star r}+ \Delta V_{\theta a}^{\star r})^{2}+(V_{\theta a}^{\star i}+ \Delta V_{\theta a}^{\star i})^{2}}+ \dots \\[6pt]
\vdots 
\end{bmatrix}_{3\times 1} \\
\pmb{\Delta V_{O\theta}^{i}} = -
\begin{bmatrix}
\frac{(\Delta P_{\theta a}X_{o\theta,aa}-\Delta Q_{\theta a}R_{o\theta,aa})(V_{\theta a}^{\star r}+\Delta V_{\theta a}^{\star r})}{(V_{\theta a}^{\star r}+ \Delta V_{\theta a}^{\star r})^{2}+(V_{\theta a}^{\star i}+ \Delta V_{\theta a}^{\star i})^{2}} + \\ 
\frac{(\Delta P_{\theta a}R_{o\theta,aa} + \Delta Q_{\theta a}X_{o\theta,aa})(V_{\theta a}^{\star i}+\Delta V_{\theta a}^{\star i})}{(V_{\theta a}^{\star r}+ \Delta V_{\theta a}^{\star r})^{2}+(V_{\theta a}^{\star i}+ \Delta V_{\theta a}^{\star i})^{2}}+ \dots \\[6pt]
\vdots 
\end{bmatrix}_{3\times 1}
\end{split}
\label{eq:10} 
\end{equation}
where superscript $r$ and $i$ represents the real and imaginary components, respectively. In a distribution network, voltage angle relative to source node and magnitude of voltage change is usually very small.
%making $V_{\theta}^{i}$ and $\Delta V_{\theta}^{i}$ very small.
Under the above assumptions, the real part of voltage change can be approximated as:
\begin{equation}
\begin{split}
\pmb{\Delta V_{O\theta}^{r}} \approx -
\begin{bmatrix}
\frac{(\Delta P_{\theta a}R_{o\theta,aa}+\Delta Q_{\theta a}X_{o\theta,aa})(V_{\theta a}^{\star r})}{(V_{\theta a}^{\star r})^{2}+(V_{\theta a}^{\star i})^{2}} - \\ 
\frac{(\Delta P_{\theta a}X_{o\theta,aa}-\Delta Q_{\theta a}R_{o\theta,aa})(V_{\theta a}^{\star i})}{(V_{\theta a}^{\star r})^{2}+(V_{\theta a}^{\star i})^{2}}+ \dots \\[6pt]
\vdots 
\end{bmatrix}_{3\times 1}
\end{split}
\label{eq:11} 
\end{equation}
Similarly, with the same arguments, the imaginary part can be approximated.
%\begin{equation}
%\begin{split}
%\pmb{\Delta V_{O\theta}^{i}} \approx -
%\begin{bmatrix}
%\frac{(\Delta Q_{\theta a}R_{o\theta,aa}+\Delta P_{\theta a}X_{o\theta,aa})(V_{\theta a}^{\star r})}{(V_{\theta a}^{\star r})^{2}+(V_{\theta a}^{\star i})^{2}} - \\ 
%\frac{(\Delta P_{\theta a}R_{o\theta,aa}-\Delta Q_{\theta a}X_{o\theta,aa})(V_{\theta a}^{\star i})}{(V_{\theta a}^{\star r})^{2}+(V_{\theta a}^{\star i})^{2}}+ \dots \\[6pt]
%\vdots 
%\end{bmatrix}_{3\times 1}
%\end{split}
%\label{eq:12} 
%\end{equation}
By recombining the real and imaginary parts, the approximate voltage change for all three phases can be written as:
\begin{equation}	
\pmb{\Delta V_{O\theta}} \approx-	
\begin{bmatrix}
\frac{\Delta S_{\theta a}^{\star}Z_{aa}}{V_{\theta a}^{\star }} + \frac{\Delta S_{\theta b}^{\star}Z_{ab}}{V_{\theta b}^{\star }}+ \frac{\Delta S_{\theta c}^{\star}Z_{ac}}{V_{\theta c}^{\star }}  \\
\frac{\Delta S_{\theta a}^{\star}Z_{ba}}{V_{\theta a}^{\star }} + \frac{\Delta S_{\theta b}^{\star}Z_{bb}}{V_{\theta b}^{\star }}+ \frac{\Delta S_{\theta c}^{\star}Z_{bc}}{V_{\theta c}^{\star }}  \\
\frac{\Delta S_{\theta a}^{\star}Z_{ca}}{V_{\theta a}^{\star }} + \frac{\Delta S_{\theta b}^{\star}Z_{cb}}{V_{\theta b}^{\star }}+ \frac{\Delta S_{\theta c}^{\star}Z_{cc}}{V_{\theta c}^{\star }} 
\end{bmatrix}
\label{eq:13} 
\end{equation}  
From equation (\ref{eq:13}), we can see that voltage change at an observation node depends on the power change of the actor node and their voltage. Besides, location of the actor and observation nodes will also affect the result, as impedance matrix $Z_{O\theta}$ relies on the location.
\end{proof}
\begin{lemma}
\textbf{Multiple actor nodes}: If there are multiple actor nodes in the unbalanced distribution network, net change in complex voltage at the observation node $O$ due to aggregate effect of all actor nodes is approximated as:
\begin{equation}	
\begin{bmatrix}
\Delta V_{O}^{a} \\
\Delta V_{O}^{b} \\
\Delta V_{O}^{c} 
\end{bmatrix} \approx- \sum_{\theta \epsilon A}  \left(	
\begin{bmatrix}
\frac{\Delta S_{\theta a}^{\star}Z_{aa}}{V_{\theta a}^{\star }} + \frac{\Delta S_{\theta b}^{\star}Z_{ab}}{V_{\theta b}^{\star }}+ \frac{\Delta S_{\theta c}^{\star}Z_{ac}}{V_{\theta c}^{\star }} \\
\frac{\Delta S_{\theta a}^{\star}Z_{ba}}{V_{\theta a}^{\star }} + \frac{\Delta S_{\theta b}^{\star}Z_{bb}}{V_{\theta b}^{\star }}+ \frac{\Delta S_{\theta c}^{\star}Z_{bc}}{V_{\theta c}^{\star }}  \\
\frac{\Delta S_{\theta a}^{\star}Z_{ca}}{V_{\theta a}^{\star }} + \frac{\Delta S_{\theta b}^{\star}Z_{cb}}{V_{\theta b}^{\star }}+ \frac{\Delta S_{\theta c}^{\star}Z_{cc}}{V_{\theta c}^{\star }}
\end{bmatrix}\right) 
\label{eq:22} 
\end{equation} 
where $A$ is the set of all actor nodes.
\end{lemma}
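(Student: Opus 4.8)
The plan is to build directly on equation (\ref{eq:8}) from the proof of Theorem 1, which the text already notes ``holds true even for the cases when there are multiple actor nodes.'' That identity expresses $\pmb{\Delta V_O}$ as the double sum $\sum_{e \in E_o} \sum_{n \in N_e}$ of per-node contributions, each proportional to $\Delta S_n^\star$ and weighted by the edge impedance $\pmb{Z_e}$. In the single-actor case this collapsed because $\Delta S_n = 0$ for every $n \neq \theta$; for multiple actors I would instead retain every node whose power changes, i.e.\ restrict the inner index to $n \in A$, where $A$ is the set of actor nodes.

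First I would interchange the order of summation so that the actor node becomes the outer index. Since $n \in N_e$ is equivalent to $e \in E_n$ (the natural analog of $E_\theta$: edge $e$ lies on the path from the source to node $n$), the pair condition ``$e \in E_o$ and $n \in N_e$'' is exactly ``$e \in E_o \cap E_n$.'' Rewriting the double sum as $\sum_{\theta \in A} \sum_{e \in E_o \cap E_\theta}$ and noting that the $\Delta S_\theta^\star / (V_\theta^\star + \Delta V_\theta^*)$ vector does not depend on the edge index, I would factor it out of the inner sum and recover the shared-path impedance $\pmb{Z_{O\theta}} = \sum_{e \in E_o \cap E_\theta} \pmb{Z_e}$ defined just after (\ref{eq:9}). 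Each outer summand is then precisely the single-actor expression (\ref{eq:9}) for node $\theta$.

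From here I would apply verbatim the same two approximations used in Theorem 1 --- neglecting $\Delta V_\theta^*/(V_\theta^* + \Delta V_\theta^*)$, and invoking the smallness of the relative voltage angle and of the voltage change when separating real and imaginary parts --- to each term of the outer sum. Because these approximations act termwise and the resulting map from power change to voltage change is linear, the aggregate voltage change is simply the superposition of the individual single-actor results, which is exactly (\ref{eq:22}).

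The main obstacle is not analytic but bookkeeping: justifying the summation interchange and confirming that, for each actor node, only the edges in $E_o \cap E_\theta$ contribute, so that the correct shared-path impedance $\pmb{Z_{O\theta}}$ appears rather than a sum over unrelated edges. Once this re-indexing is carried out cleanly, the conclusion follows by linearity and direct reuse of Theorem 1, so no new estimate or approximation is needed.
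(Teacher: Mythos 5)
Your argument is correct and is exactly the superposition argument the paper intends: the paper itself omits the proof (deferring to Lemma 1 of its reference), but its stated route is the same procedure you carry out, namely starting from the multi-node identity (\ref{eq:8}), re-indexing the double sum so each actor node $\theta$ contributes its shared-path impedance $\pmb{Z_{O\theta}}$, and applying the approximations of Theorem 1 termwise. Your explicit justification of the summation interchange ($n \in N_e \Leftrightarrow e \in E_n$) fills in the bookkeeping the paper leaves implicit, and no step is missing.
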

\begin{proof}
The results follows the same procedure as in Lemma 1 of \cite{jhala2017probabilistic} and has been omitted due to page constraint.
\end{proof}
\subsection{Analytical Approximation for star and delta connection}
Star (wye) and Delta are the two most common load configurations. The Analytical approximation in (\ref{eq:13}) is derived for a star load configuration. 
%In Star connection line current is equal to phase current and hence the line current $I_{e}$ in equation $(4)$ is expressed directly in terms of phase power and voltage i.e $I_{e} = \sum_{n\epsilon N_{e}} (S_{n}/V_{n})$.
In delta, line current is not equal to phase current, rather it is $\sqrt{3}$ times the phase current, which is not true for an unbalanced condition. So, line current for delta load is expressed as the difference of phase currents as shown in equation (\ref{eq:14}), which is valid for both balanced and unbalanced conditions, i.e.,
\begin{equation}
\begin{split}
I_{A} = I_{a}-I_{c} = \frac{S_{a}^{\star}}{V_{a}^{\star}} - \frac{S_{c}^{\star}}{V_{c}^{\star}} \\
%I_{B} = I_{b}-I_{a} = \frac{S_{b}}{V_{b}} - \frac{S_{a}}{V_{a}} \\
%I_{C} = I_{c}-I_{b} = \frac{S_{c}}{V_{c}} - \frac{S_{b}}{V_{b}}
\end{split}
\label{eq:14} 
\end{equation}
Here $I_{A}$ denotes the line current, and $I_{a}, I_{c}$ represents the phase currents. Expressing line currents for all phases in the form of (\ref{eq:14}), voltage of phase $a$ in $(\ref{eq:4})$ is reformulated as:   
\begin{equation}
\begin{split}
{\scriptstyle
V_{oa} = 
%V_{sa} - [Z_{aa}(\frac{S_{a}}{V_{a}} - \frac{S_{c}}{V_{c}}) +Z_{ab}(\frac{S_{b}}{V_{b}} - \frac{S_{a}}{V_{a}}) + \\ Z_{ac}(\frac{S_{c}}{V_{c}} - \frac{S_{b}}{V_{b}})] \\
V_{sa} - [\frac{S_{a}^{\star}}{V_{a}^{\star}}(Z_{aa}-Z_{ab}) +\frac{S_{b}^{\star}}{V_{b}^{\star}}(Z_{ab}-Z_{ac}) + \frac{S_{c}^{\star}}{V_{c}^{\star}}(Z_{ac}-Z_{aa})]}
\end{split}
\label{eq:15} 
\end{equation}
%Now putting the line current for all three phases gives the following expression:
Now, we can express (\ref{eq:15}) for all the three phases in a compressed form as:
\begin{equation}
\pmb{V_{O}} = \pmb{V_{S}} - \sum_{e\epsilon E_{o}} I_{e}\pmb{Z_{e}}
\label{eq:16}
\end{equation}
where $ \pmb{Z_{e}} = \begin{bmatrix} 
	z_{e,aa}-z_{e,ab} &z_{e,ab}-z_{e,ac} & z_{e,ac}-z_{e,aa} \\
	z_{e,ba}-z_{e,bb} &z_{e,bb}-z_{e,bc} & z_{e,bc}-z_{e,ba}  \\
	z_{e,ca}-z_{e,cb} &z_{e,cb}-z_{e,cc} & z_{e,cc}-z_{e,ca} 
\end{bmatrix} $ \\

Here, we can observe that (\ref{eq:16}) is similar to (\ref{eq:4}), except that the impedance matrix $\pmb{Z_{e}}$, whose each entry now involves the binary component unlike a single component in the star connection case. Thus, the derived analytical expression  (\ref{eq:13}) is valid for both the load configurations, with a slightly different impedance matrix for each case. 
\vspace{-0.5mm}
\section{Results}
This section verifies the analytical method of VSA based on the simulation of the modified IEEE 37 bus system shown in Fig.\ref{fig:2}. This test bed is selected due to its highly unbalanced nature and has been used by various researchers \cite{khushalani2007development} in the past for validation.
The nominal voltage of this system is 4.8 kV. Classical load flow method is used as a baseline method for validating our proposed method. 
To evaluate the performance of the proposed method, we consider two scenarios to first examine the accuracy and then analyzes the various advantages. First scenario is designed to study the accuracy of the proposed method across different power variations. Here, nodes $22$ and $9$ are the actor and the observation nodes, respectively with rated power of $42$ kW each. Power of node $22$ is varied from $0$ to $84$ kW in steps of $7$ kW. The maximum variation is nearly $100\%$ of the base load, in both the directions, which is more than the typical range encountered in practice. Positive load change indicates an increase in power consumption or decrease in power injection, whereas negative load change denotes decrease in power consumption or increase in injection. The change in power at any one phase, affects the voltages of all phases. First, we have tested the efficiency of the proposed method for same phase case, i.e., power is varied in phase $c$ of actor node $22$ and voltage change is monitored for the same phase of observation node $9$, which is shown in Fig. \ref{fig:3}. From Fig. \ref{fig:3}, it can be observed that the proposed analytical method approximates the baseline method for a sufficiently large range of power deviation, which demonstrates its accuracy. For approximate deviation upto $80\%$ of the base load, the analytical method almost coincides with simulated value, with small error in the range of $10^{-4}$ to $10^{-3}$ pu for power deviation outside this limit. To test the effectiveness of the proposed method for cross phase effect, voltage change is recorded for phase $a$ of observation node $9$, which is shown in  Fig. \ref{fig:3}. Here also, the proposed method approximates the baseline with a good accuracy over a large range of power deviation. This is because the voltage change factor from each phase is incorporated in our voltage sensitivity approximation. From  Fig. \ref{fig:3}, it can also be inferred that the coupling between power and voltage changes in the same phase is higher (around five times) than in the cross phase. Change in voltage in phase $a$ due to change in power at phase $c$ is caused by mutual inductance between different phases of the line and mutual inductance is less than self inductance.

The second scenario is designed to study the accuracy of the proposed approach across different nodes. Here, power drawn by phase $c$ of node $22$ is increased by $21$ kW and voltage change is observed across all nodes. The voltage change for all the phases is shown in the  Fig. \ref{fig:4}. It can be seen that $\Delta V$ is higher for nodes closer to the actor node as the length of shared conductor between the observation and the actor node is large. On the other hand, $\Delta V$ is small for nodes closer to substation, as the length of shared conductor and impedance of shared line is smaller. It can be observed that the accuracy of the proposed method is very high with more than $95\%$, but it is slightly low in the cross phase of certain nodes, that have relatively high voltage change. 

%As expected, higher power change of actor node will lead to more significant voltage change, which implies potential voltage stability issue with DER and active loads integrations. In addition, our proposed analytical method approximates the baseline method that demonstrates its accuracy.  

\begin{figure}[h!]
	\centering
	\includegraphics[width = 6.9cm,height=3cm]{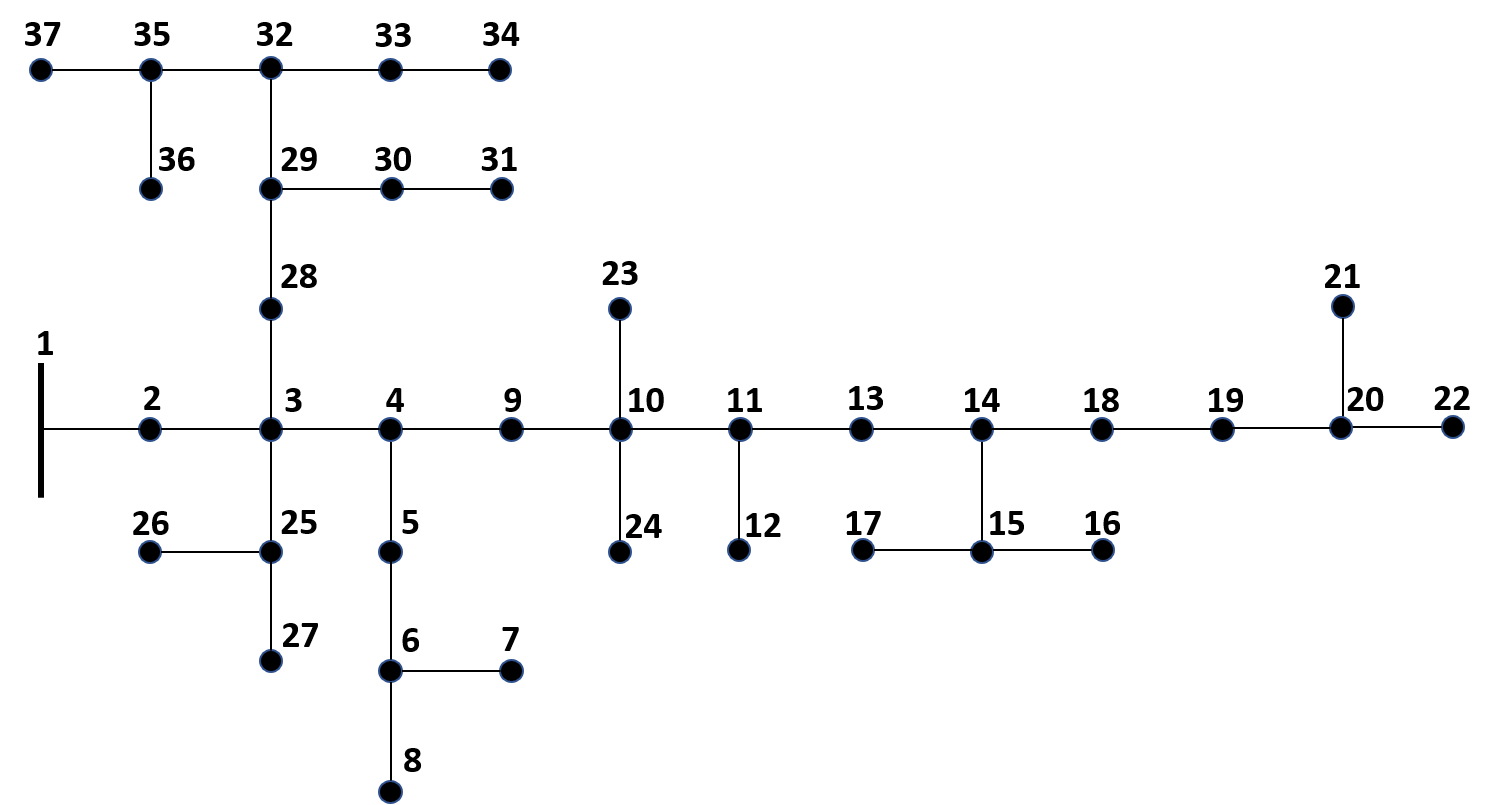}
	\caption{Modified IEEE 37 bus network}
	\label{fig:2}
\end{figure}
\begin{figure}[h!]
	\includegraphics[width = 8.2cm,height=6cm]{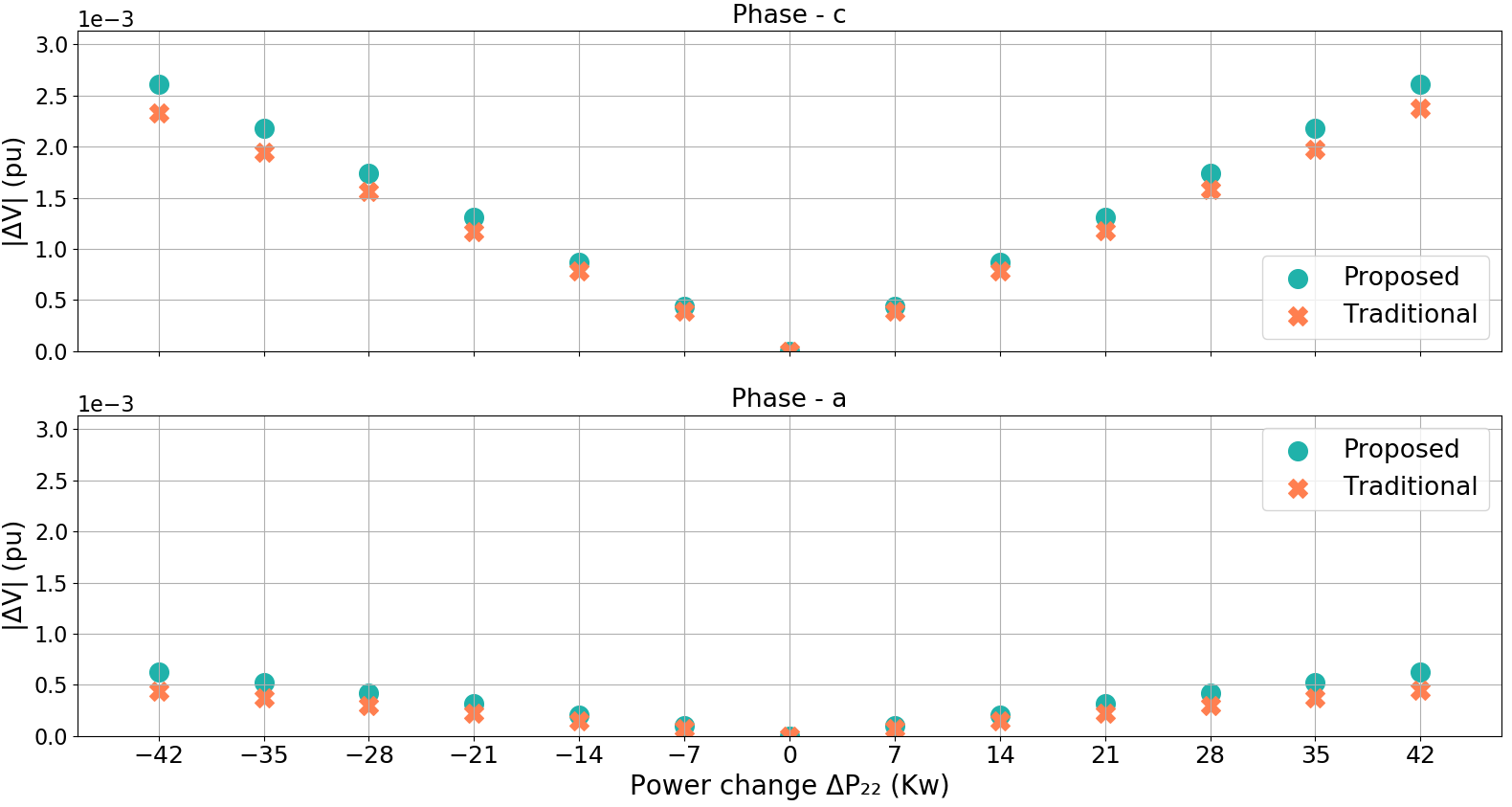}
	\caption{Change in voltage at phase $c$ and $a$ of node $9$, when actor node is $22$}
	\label{fig:3}
\end{figure}
\begin{figure}[h!]
	\includegraphics[width = 8.8cm,height=6cm]{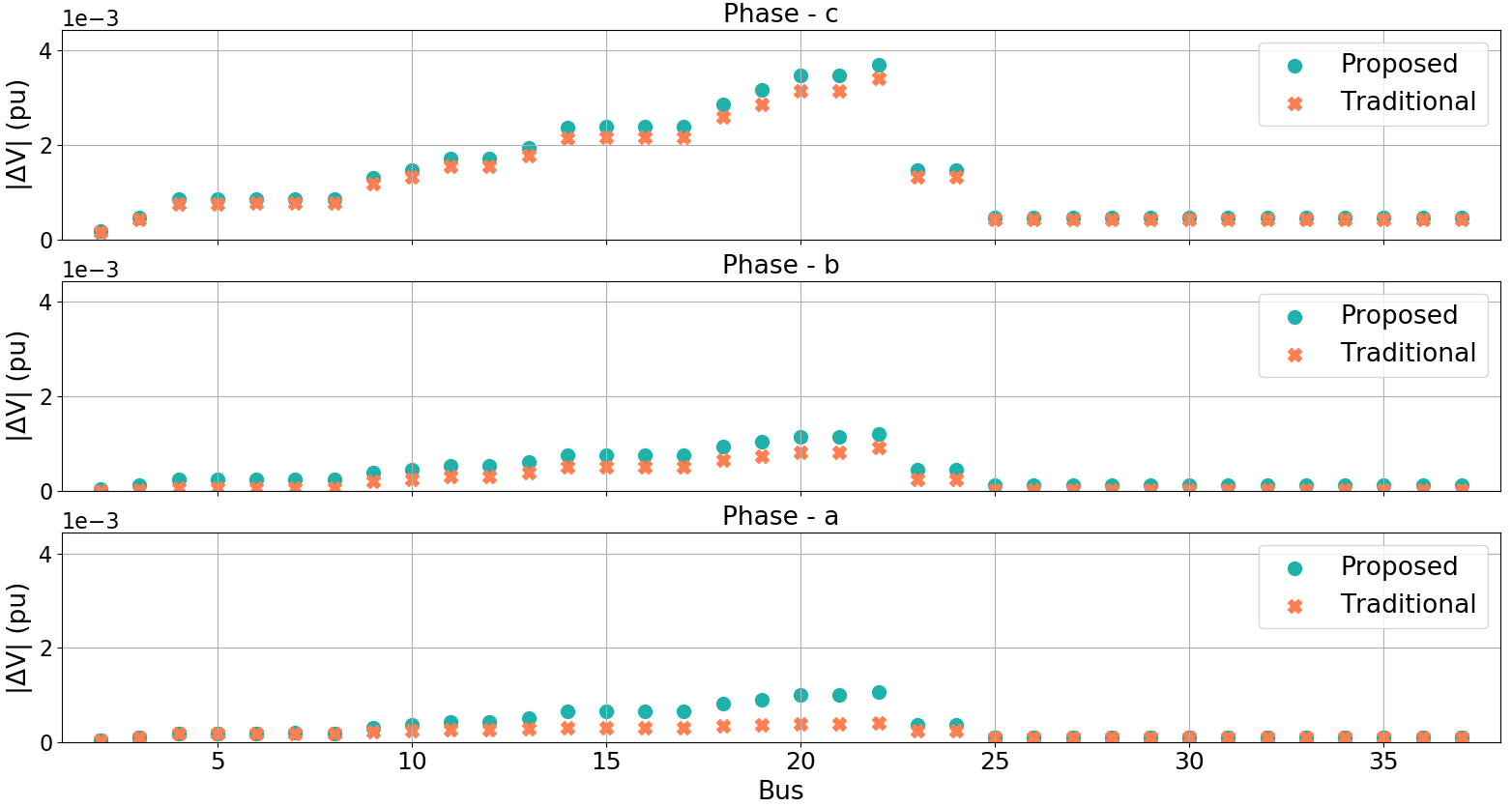}
	\caption{Change in voltage at phase $c,b,a$ of different observation nodes when actor node is $22$}
	\label{fig:4}	
\end{figure}
\vspace{-0.2cm} 

The proposed approach has various advantages over existing methods.
First of all, it addresses the computational shortcoming of numerical approaches.
%It has significantly low computational complexity. 
%Computational complexity is generally measured in two ways, i.e., system execution time and the scaling of number of operations captured via Big O. 
The complexity of the proposed analytical method is of the order $O(1)$, because the calculation of voltage change in (\ref{eq:13}) does not scale with the size of the network ($n$). While, the complexity in classical Newton-Raphson method is of order $O(n^{3})$, as it involves the inversion operation of the Jacobian matrix. Specifically, the execution time of our method to calculate the voltage sensitivity for a single observation node is $0.00871s$, compared to $0.0537s$ in classical load flow method with intel i7 processor. Thus, the proposed method is nearly ten times faster and this factor will increase significantly as the size of the network increases. For example, if we consider IEEE 342 node system, the complexity of the traditional method will be $O(342^3)$ with the execution time growing in a similar fashion. On the other hand, the time taken by our method will be similar to the $37$ bus system as complexity remains constant.
Secondly, it addresses the generalization problem of analytical approaches by appropriately validating the proposed method with standard three phase unbalanced test system. Finally, because of the first two advantages, the proposed framework could also allow us to perform stochastic analysis which is not possible with traditional approaches. Random change in power caused by renewable DERs causes random fluctuations in the distribution system voltage. The proposed approach can further be used to estimate the probability of voltage violations and find dominant nodes which have maximum influence on the voltage sensitivity \cite{jhala2019dominant}. Identification of dominant nodes can later be used to mitigate the voltage violation of specific critical nodes. This clearly shows that the proposed approach has an edge over traditional methods. 
\vspace{-0.5mm}
\section{Conclusion}
The objective of this work is to derive an analytical expression of voltage sensitivity due to power change at certain locations of an unbalanced distribution system. The major improvement of the proposed analytical method over conventional methods, is the lower computational cost. Furthermore, it lay the groundwork for stochastic analysis of large scale penetration of DERs. We have examined the fidelity of the analytical approximation under various load change scenarios, with different load configurations. The results demonstrate the accuracy of the proposed method relative to conventional baseline method.
As part of future work, we plan to derive error bounds associated with our analytical approximation, validate in larger test network, and extend this work to a stochastic framework incorporating spatio-temporal uncertainty.
\section*{Acknowledgment}
This material is based upon work partly supported by the Department of Energy, Office of Energy Efficiency and Renewable Energy (EERE), Solar Energy Technologies Office, under Award \# DE-EE0008767 and National science foundation under award \# 1855216.

% trigger a \newpage just before the given reference
% number - used to balance the columns on the last page
% adjust value as needed - may need to be readjusted if
% the document is modified later
%\IEEEtriggeratref{8}
% The "triggered" command can be changed if desired:
%\IEEEtriggercmd{\enlargethispage{-5in}}

% references section

% can use a bibliography generated by BibTeX as a .bbl file
% BibTeX documentation can be easily obtained at:
% http://mirror.ctan.org/biblio/bibtex/contrib/doc/
% The IEEEtran BibTeX style support page is at:
% http://www.michaelshell.org/tex/ieeetran/bibtex/
\bibliographystyle{IEEEtran}
% argument is your BibTeX string definitions and bibliography database(s)
\bibliography{IEEEabrv,references}
%
% <OR> manually copy in the resultant .bbl file
% set second argument of \begin to the number of references
% (used to reserve space for the reference number labels box)
%\begin{thebibliography}{1}
%
%%\bibitem{IEEEhowto:kopka}
%%H.~Kopka and P.~W. Daly, \emph{A Guide to \LaTeX}, 3rd~ed.\hskip 1em plus
%%  0.5em minus 0.4em\relax Harlow, England: Addison-Wesley, 1999.
%
%\end{thebibliography}

% that's all folks
\end{document}